\newtheorem{observation}[lemma]{Observation}
\begin{document}

\title{Space-efficient conversions from SLPs}
\author{Travis Gagie\inst{1,5},  Adri\'an Goga\inst{2}, Artur Je\.z\inst{3}, Gonzalo Navarro\inst{4,5}}

\institute{
Faculty of Computer Science, Dalhousie University, Canada
\and
Department of Computer Science, Comenius University in Bratislava, Slovakia
\and
Institute of Computer Science, University of Wrocław, Poland
\and
Department of Computer Science, University of Chile, Chile
\and
CeBiB — Center for Biotechnology and Bioengineering, Chile
}

\maketitle

\begin{abstract} 
We give algorithms that, given a straight-line program (SLP) with $g$ rules that generates (only) a text $T [1..n]$, builds within $O(g)$ space the Lempel-Ziv (LZ) parse of $T$ (of $z$ phrases) in time $O(n\log^2 n)$ or in time $O(gz\log^2(n/z))$. We also show how to build a locally consistent grammar (LCG) of optimal size $g_{lc} = O(\delta\log\frac{n}{\delta})$ from the SLP within $O(g+g_{lc})$ space and in $O(n\log g)$ time, where $\delta$ is the substring complexity measure of $T$. Finally, we show how to build the LZ parse of $T$ from such a LCG within $O(g_{lc})$ space and in time $O(z\log^2 n \log^2(n/z))$. All our results hold with high probability.
\end{abstract}

\section{Introduction}

With the rise of enormous and highly repetitive text collections \cite{Navacmcs20.3}, it is becoming practical, and even necessary, to maintain the collections compressed all the time. This requires being able to perform all the needed computations, like text searching and mining, directly on the compressed data, without ever decompressing it.

As an example, consider the modest (for today's standards) genomic repository {\em 1000 Genomes} \cite{DAB15} containing the genomes of 2,500 individuals. At the typical rate of about 3 billion bases each, the collection would occupy about 7 terabytes. Recent projects like the {\em Million Genome Initiative}\footnote{\tt https://digital-strategy.ec.europa.eu/en/policies/1-million-genomes} would then require petabytes. The 1000 Genomes project stores and distributes its data already in compressed form\footnote{In VCF, \tt https://github.com/samtools/hts-specs/blob/master/VCFv4.3.pdf} to exploit the fact that, compared to a reference genome, each individual genome has only one difference every roughly 500 bases, on average. Certainly one would like to manipulate even such a modest collection always in compressed form, using gigabytes instead of terabytes of memory!

Some compression formats are more useful for some tasks than others, however. For example, Lempel-Ziv compression \cite{LZ76}
tends to achieve the best compression ratios, which makes it more useful for storage and transmission.
Grammar compression \cite{KY00} yields slightly larger files, but in exchange it can produce $T$ in streaming form, and provide direct access to any text snippet \cite{BLRSRW15}, as well as indexed searches \cite{CNPjcss21}.
Locally consistent grammars provide faster searches, and support more complex queries, while still being bounded by well-known repetitiveness measures \cite{CEKNPtalg20,KNPtit22,KNOlatin22,kempa2023collapsing,gawrychowski2018optimal}. The run-length-encoded Burrows-Wheeler Transform of $T$ requires even more space \cite{KK20}, but in exchange it enables full suffix tree functionality \cite{GNPjacm19}.

It is of interest, then, to {\em convert} from one format to another. Doing this conversion by decompressing the current format and then compressing to the new one is impractical, as it is bound to use $\Omega(n)$ space, which in practice implies running $\Theta(n)$-time algorithms on secondary storage. Our goal is to perform the conversions {\em within space proportional to the sum of the input and output files}. We call this a {\em compressed conversion}.

Let $z$, $g$, $g_{lc}$, and $r$ be the asymptotic (i.e., up to constant factors) sizes of the Lempel-Ziv (LZ) parse of a text string $T[1..n]$, a straight-line program (SLP) or context-free grammar that expands to $T$, a locally consistent grammar (LCG) that expands to $T$, and the number of runs in the Burrows-Wheeler Transform (RLBWT) \cite{BW94} of $T$, respectively.
We refer to some SLP or LCG because finding the smallest SLP (and probably LCG) that generates a given string is NP-complete \cite{CLLPPSS05}.
It holds that $z \le g \le g_{lc}$ and, in practice, $g_{lc} \le r$.

Rytter~\cite{Ryt03} showed how to build an SLP of size $g = O(z\log(n/z))$ within $O(g)$ space and time from the LZ parse of $T$ in the non-overlapping case (i.e., when phrases cannot overlap their sources), and Gawrychowski~\cite[Lemma 8]{gawrychowski2011pattern}
extended this result to general LZ.
Note that these algorithms take space {\em and time} proportional to the compressed input and output. {\em fully-compressed conversions} those that take time polynomial in the size of the input, the output, and $\log n$. On highly repetitive texts, all the given measures can be exponentially smaller than $n$, hence the relevance of fully-compressed conversions.

Policriti and Prezza \cite{PP17} showed how to convert 
from RLBWT to LZ in $O(r+z)$ space and $O(n\log r)$ time, and back in the same space and $O(n\log(rz))$ time.
Kempa and Kociumaka \cite{KK20} convert from LZ to RLBWT in $O(z\log^8 n)$; this conversion is then fully-compressed. 
Arimira et al. \cite{arimura2023optimally} recently showed how to convert from the CDAWG of size $e$ to either RLBWT or LZ, both in $O(e)$ time and space, though $e$ is the weakest among the commonly accepted repetitiveness measures \cite{navarro2021indexing}. 

For brevity we omit a large body of work on producing compressed representations from the original string $S$, aiming to use little space on top of $S$ itself. We also omit work on compression formats that are too weak for repetitive data, like LZ78 or run-length compression of the text.


In this paper we contribute to the state of the art with compressed and fully-compressed conversions between various formats, all of which then use space linear in the input plus the output, and work correctly with high probability:
\begin{enumerate}
    \item A compressed conversion from any SLP to LZ in $O(n\log^2 n)$ time.
    \item A fully-compressed conversion from any SLP to LZ in $O(gz\log^2(n/z))$ time.
    \item A compressed conversion from any SLP to a certain (particularly small) LCG \cite{christiansen2020optimal} in $O(n\log g_{lc})$ time.
    \item A fully-compressed conversion from LCGs of some particular kind \cite{christiansen2020optimal,KNOlatin22} to LZ in $O(z\log^2 n \log^2(n/z))$ time.
\end{enumerate}
The third conversion builds a particular LCG whose size is the optimal $O(\delta\log\frac{n}{\delta})$ 
\cite{KNOlatin22}; other similar LCSLPs \cite{CEKNPtalg20} can be produced analogously. Our contributions together with previously known conversions are depicted Fig. \ref{fig:conversions_diagram}.

\begin{figure}[ht]
    \centering
    \begin{tikzpicture}[align=center, node distance=3cm]
\node (g) {$g$};
\node[right of=g, above of=g] (glc) {$g_{lc}$};
\node[right of=glc, below of=glc] (z) {$z$};
\node[right of=z, below of=z] (e) {$e$};
\node[right of=e, above of=e] (r) {$r$};

\path [->, very thick] (g) edge node[above] {$O (gz \log^2 (n/z))$} (z);
\path [->, dashed, very thick] (g) edge[bend right] node[above] {$O (n\log^2 n)$} (z);
\path [->, dashed, very thick] (g) edge node[left] {$O (n\log g_{lc})$} (glc);
\path [->, very thick] (glc) edge node[right] {$O(z \log^2 n \log^2 (n/z))$} (z);

\path [->] (z) edge[bend right] node[above] {$O(z\log(n/z))$ \cite{Ryt03,gawrychowski2011pattern}} (g);
\path [->, dashed] (z) edge node[above] {$O(n\log(rz))$ \cite{PP17}} (r);
\path [->] (z) edge[bend left] node[above] {$O(z\log^ 8 n)$ \cite{KK20}} (r);

\path [->, dashed] (r) edge[bend left] node[above=0.1cm] {$O(n\log r)$ \cite{PP17}} (z);

\path [->] (e) edge node[left=0.1cm] {$O(e)$ \cite{arimura2023optimally}} (z);
\path [->] (e) edge node[right] {$O(e)$ \cite{arimura2023optimally}} (r);
\end{tikzpicture}
    \caption{A diagram capturing all known conversions between compression methods of highly repetitive data. Dashed lines represent compressed conversions, while solid lines represent fully-compressed ones. Thick lines mark contributions of this article.}
    \label{fig:conversions_diagram}
\end{figure}
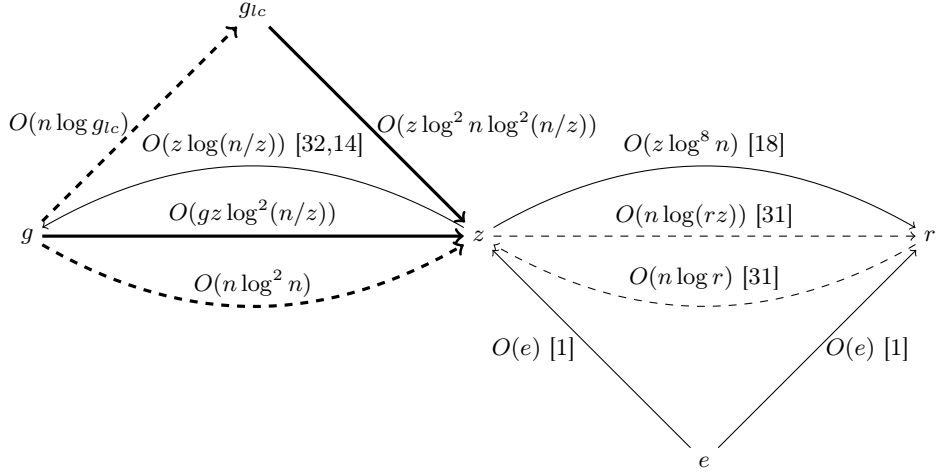

\section{Preliminaries}

A {\em string} $T[1..n]$ is a sequence of symbols $T[1]\,T[2]\ldots T[n]$ over an ordered alphabet $\Sigma$. For every $1 \le i,j \le n$, $T[1..i] = T[..i]$ is a {\em prefix} of $T$, $T[j..n] = T[j..]$ is a {\em suffix} of $T$, and $T[i..j]$ is a {\em substring} of $T$, which is the empty string $\varepsilon$ if $i>j$. The {\em length} of $T[1..n]$ is $|T| = n$; the length of $\varepsilon$ is $|\varepsilon|=0$. The {\em concatenation} of two strings $S \cdot S'$ is defined as $S[1]\,S[2]\ldots S[|S|]\,S'[1]\,S'[2]\ldots S'[|S']]$. The {\em lexicographic} order between strings $S \not= S'$ is defined as that between $S[1]$ and $S'[1]$ if these are different, or as the lexicographic order between $S[2..]$ and $S'[2..]$ otherwise; the empty string $\varepsilon$ is smaller than every other string. The {\em co-lexicographic} order is defined as the lexicographic order between the reversed strings.

The \textit{Karp-Rabin fingerprint} or the \textit{Karp-Rabin hash} of a string $S[1..n]$ is a value $\phi(S) = \sum_{i=1}^n \left( S[i] x^i \right) \mod p$, for a prime $p$ and $x < p$ \cite{karp1987efficient}. The crucial property of this hash is that if $X \neq Y$, then $\phi(X) \neq \phi(Y)$ with high probability. Another well-known and useful property is that for strings $S, S', S'$ for which $S = S'\cdot S''$ holds, we can compute the hash of any of the strings knowing the hashes of the other two, in $O(1)$ time (see, e.g., \cite{NP18}).

A \textit{straight-line program (SLP)} of a text $T$ is a context-free grammar in Chomsky's normal form
(so in particular: each rule is at most binary)
generating only $T$, which contains exactly one rule for each nonterminal and the rules can be linearly ordered, such that for any rule $X \to YZ$ it holds that the rules for both $Y$ and $Z$ precede the rule for $X$ in the ordering.

The height of the SLP is the height of the derivation tree,
i.e.\ the height of a letter is $0$ and the height of a nonterminal $X$ with a (unique) rule $X \to YZ$ is $1$ plus maximum of height of $Y$ and $Z$.
The size of the SLP is the number of its rules. We define the {\em expansion} of a nonterminal $X$ as the string it produces: $\exp(a)=a$ if $a$ is a terminal symbol, and $\exp(X) = \exp(Y) \cdot \exp(Z)$ if $X \to YZ$.

We say that a grammar is a \textit{locally consistent grammar (LCG)} if it is constructed by iteratively applying rounds of a particular locally consistent parsing, which guarantees that matching fragments $S[i..j] = S[i'..j']$ are parsed the same way, apart from the $O(1)$ blocks from either end. This key property is lifted to such grammars, for which matching fragments are spanned by almost identical subtrees of the parse tree, differing in at most $O(1)$ flanking nonterminals at each level \cite{gawrychowski2018optimal,christiansen2020optimal}.
Such a parsing is defined in Section~\ref{sec:locally_consistent_parsing}.

The Lempel-Ziv (LZ) parse of a string $T$ \cite{LZ76} is a sequence $F_1, F_2, \ldots, F_z$ of phrases,
such that $F_1 \cdot F_2 \cdots F_z = T[1..n]$ and $F_i$ is either a single letter,
when this letter is not present in $F_1 \cdot F_2 \cdots F_{i-1}$, or else $F_i$
is the maximal string that occurs twice in $F_1 \cdot F_2 \cdots F_i$,
that is, it has an occurrence starting within $F_1 \cdot F_2 \cdots F_{i-1}$;
in non-overlapping LZ we additionally require that $F_i$ occurs within $F_1 \cdot F_2 \cdots F_{i-1}$.
It is known that $z \le g = O(z\log(n/z))$, where $g$ is  
the size of the smallest grammar generating $T$~\cite{Ryt03,CLLPPSS05,gawrychowski2011pattern}. 

We assume the standard word-RAM model of computation with word length $\Theta (\log n)$, in which basic operations over a single word take constant time. Some of our results hold with high probability (whp), meaning with probability over $1-n^{-c}$
for any desired constant $c$.
We can make the constant arbitrarily large at the cost of increasing the constant multiplying the running time.

\section{Building the LZ parse from an SLP in $\tilde{O}(n)$ time}
\label{sec:lz77npolylog}

Our first result computes the LZ parse of a text $T[1..n]$ given an arbitrary SLP of size $g$ that represents $T$, in time $O(n \log^2 n)$ and space $O(g)$;
note that the classic LZ constructions use suffix trees or arrays and use $\Omega(n)$ space.
We first describe a couple of tools we need to build on the SLP before doing the conversion.

\begin{lemma}
\label{lem:basic_checks}
    Given an SLP of size $g$ for $T[1..n]$ we can construct in $O(g)$ time and space a new SLP $G$, and augment it with a data structure such that:
    \begin{itemize}
        \item $G$ has height $O(\log n)$.
        \item Any $T [i]$ can be accessed in $O(\log n)$ time.
        \item The Karp-Rabin fingerprint of any $T [i..j]$ can be computed in $O(\log n)$ time. 
        \item The longest common prefix of any $T [i..j]$ and $T [i'..j']$ can be computed (whp) in $O(\log^2 n)$ time. 
        \item Any $T [i..j]$ and $T [i'..j']$ can be compared lexicographically and co-lexicograph\-ically (whp) in $O(\log^2 n)$ time. 
    \end{itemize}
\end{lemma}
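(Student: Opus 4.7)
The plan is to first convert the given SLP into a balanced one of the same asymptotic size, and then decorate it with bottom-up annotations so that every query becomes a short descent in the parse DAG. The balancing is done by invoking the construction of Ganardi, Je\.z and Lohrey, which turns any SLP of size $g$ generating a string of length $n$ into an equivalent SLP $G$ of size $O(g)$ and height $O(\log n)$ in $O(g)$ time and space. Once $G$ is in hand, for each nonterminal $X$ I would store $|\exp(X)|$, the Karp--Rabin fingerprint $\phi(\exp(X))$, and the power $x^{|\exp(X)|} \bmod p$; all three propagate bottom-up through rules $X \to YZ$ in $O(1)$ time per rule via the standard concatenation formula $\phi(UV) = \phi(U) + x^{|U|}\phi(V)$.

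Random access to $T[i]$ is then the textbook descent from the start symbol that, at each binary rule $X \to YZ$, branches to $Y$ or to $Z$ (subtracting $|\exp(Y)|$ from $i$) according to whether $i \le |\exp(Y)|$; it costs $O(\log n)$ since $G$ has height $O(\log n)$. To answer a fingerprint query for $T[i..j]$ I would trace the two root-to-leaf paths ending at positions $i$ and $j$, split them at their lowest common ancestor, and decompose $[i..j]$ into the $O(\log n)$ complete subtree expansions that hang off those paths (right siblings on the $i$-path, left siblings on the $j$-path). Reading their stored fingerprints in order and folding them together with $O(1)$ arithmetic per merge yields the answer in $O(\log n)$ time.

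The longest common prefix of $T[i..j]$ and $T[i'..j']$ is obtained by a binary search on its length $\ell$, using at each step a fingerprint equality test between $T[i..i+\ell-1]$ and $T[i'..i'+\ell-1]$; this is $O(\log n)$ fingerprint queries of $O(\log n)$ cost each, for $O(\log^2 n)$ total. Correctness holds whp by a union bound over the polynomially many fingerprint comparisons that the data structure could ever perform, provided the modulus $p$ is chosen large enough (which only inflates the running-time constant). Lexicographic comparison reduces to one LCP computation followed by a single character access at the mismatch (or a length check if one substring is a prefix of the other). Co-lexicographic comparison is handled symmetrically by maintaining, in addition to $G$, a balanced SLP for $T^{\mathrm{rev}}$ obtained by swapping the children of every binary rule of the input SLP before balancing; translating index ranges turns a co-lex query on $T$ into a lex query on $T^{\mathrm{rev}}$, all within $O(g)$ additional space.

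The only real obstacle is the first step: getting both $O(g)$ size and $O(\log n)$ height together. Classical balancing tools (Rytter-style AVL-grammars, heavy-path decompositions, etc.) cost an extra $\log n$ factor in size and therefore do not fit in the $O(g)$ budget promised by the lemma; invoking the stronger Ganardi--Je\.z--Lohrey balancing theorem is what makes the whole construction go through. Everything after that is a routine augmentation, modulo the standard whp analysis of Karp--Rabin fingerprints.
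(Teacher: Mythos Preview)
Your proposal is correct and follows essentially the same route as the paper: balance via Ganardi--Je\.z--Lohrey, annotate nonterminals with expansion lengths and Karp--Rabin data to get $O(\log n)$ access and fingerprints (the paper cites Bille et al.\ for this rather than spelling it out), then binary/exponential search for the LCP and compare the first mismatching characters. The only cosmetic difference is that for co-lexicographic comparison the paper computes longest common suffixes directly on the same balanced SLP, whereas you build a second balanced SLP for $T^{\mathrm{rev}}$; both stay within $O(g)$ space.
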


\begin{proof}
Assume that we are given an SLP with $g$ rules for a text $T [1..n]$.
Ganardi et al.~\cite{GJL21} showed how we can, in $O(g)$ time and space, turn it into an SLP $G$ of size $O(g)$ and height $O(\log n)$. We work on $G$ from now on. Given its height, it is easy to augment $G$ with $O(g)$-space structures that, in $O( \log n)$ time,
finds any character $T [i]$ and returns the Karp-Rabin hash of any substring $T [i..j]$ (see Ganardi et al.~\cite{GJL21}, which refers to a simple data structure from Bille et al.~\cite{BGCSVV17}).
Given two substrings of $T$, we can then compute their longest common prefix in $O(\log^2 n)$ time---whp of obtaining the correct answer---by exponentially searching for its length $\ell$ \cite[Thm.~3]{BGCSVV17};
by checking their characters at offset $\ell+1$ we can also 
compare the substrings of $T$ lexicographically within the same time complexity.
We can similarly compute the longest common suffix of two substrings and thus compare them co-lexicographically (by comparing the preceding characters). \qed
\end{proof}

We will also use a variant of a z-fast trie. 

\begin{lemma} \label{lem:zfast}
Let $\mathcal{S}$ be a sorted multiset of $m$ strings of total length $n$. Then one can build, in $O(m)$ time whp, a data structure of size $O(m)$ that, given a string $P$, finds in $O(f_h \log n)$ time the lexicographic range of the strings in $\mathcal{S}$ prefixed by $P$, where $f_h$ is the time to compute a Karp-Rabin fingerprint of a substring of $P$. If the range is empty, the answer can be incorrect. If the range is nonempty, the answer is correct whp.
\end{lemma}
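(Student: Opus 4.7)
The plan is to adapt the classical z-fast trie of Belazzougui, Boldi, Pagh, and Vigna to this setting. The structure will consist of (i) the compact trie $\mathcal{T}$ of $\mathcal{S}$, which has $O(m)$ nodes and edges, together with (ii) a hash table $H$ storing, for each internal node $v$ with parent $u$, the Karp--Rabin fingerprint of $v$'s \emph{handle} $h(v)$: the prefix of the root-to-$v$ path of length equal to the unique power of two in $(d(u), d(v)]$, where $d(\cdot)$ denotes the string depth. Every string in $\mathcal{S}$ that descends through $v$ has $h(v)$ as a prefix, so looking up the fingerprint of such a prefix retrieves $v$ from $H$.

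To build the structure, I would insert the strings of $\mathcal{S}$ in their given sorted order using standard rightmost-path insertion on the compact trie: each new string is merged with the current rightmost leaf by splitting the appropriate edge at the LCP depth. This produces $\mathcal{T}$ with $O(m)$ nodes. While creating each internal node $v$, I compute the fingerprint of $h(v)$ from those of the input strings using the standard Karp--Rabin arithmetic and insert it into $H$ with a pointer back to $v$, for $O(m)$ total entries and $O(m)$ space.

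To answer a query for a pattern $P$, I would exponentially search for the longest power-of-two-length prefix of $P$ whose fingerprint lies in $H$, and then binary-search within the resulting interval of lengths to pinpoint the deepest handle that matches: each tried length costs $O(f_h)$ for the fingerprint of that prefix of $P$ plus $O(1)$ for the lookup in $H$, and there are $O(\log n)$ such lengths, for a total of $O(f_h \log n)$. This identifies an internal node $v$ whose handle is (claimed to be) a prefix of $P$. Descending at most one edge below $v$ and comparing one Karp--Rabin fingerprint between $P$ and the corresponding edge label determines the locus where $P$ falls in $\mathcal{T}$, from which the sought lexicographic range is the $O(1)$-readable leaf interval below this locus.

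The main obstacle is the high-probability correctness analysis. When $P$ is indeed a prefix of some string in $\mathcal{S}$, the correct deepest handle is present in $H$ and is retrieved unless a Karp--Rabin collision occurs among the $O(m)$ stored handles and the $O(\log n)$ queried prefixes of $P$. Selecting the Karp--Rabin prime from a range of size $n^{\Theta(1)}$ and union-bounding over all these potentially colliding pairs bounds the failure probability by $n^{-c}$ for any desired constant $c$, as claimed. When $P$ has no match, a collision may steer the search to a spurious node whose subtree yields an incorrect nonempty range; this is precisely the failure mode permitted by the lemma statement, so no additional verification is required.
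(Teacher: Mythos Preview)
Your approach is the same as the paper's: both invoke the z-fast trie; the paper merely cites Belazzougui et al., Kempa--Kosolobov, and Navarro--Prezza, whereas you attempt an explicit description. The query-time bound and the one-sided correctness guarantee are argued along the right lines.

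Two genuine gaps remain, however. First, your definition of a node's handle length as ``the unique power of two in $(d(u),d(v)]$'' is wrong: that interval may contain no power of two (e.g.\ $(5,7]$) or several (e.g.\ $(3,10]$). The z-fast trie takes as handle length the \emph{$2$-fattest} integer in $(d(u),d(v)]$---the unique element there divisible by the highest power of two---and the search is the corresponding fat binary search driven by that quantity, not the two-phase ``exponential then binary'' search you sketch; with your definition some internal nodes would have no handle at all and could never be reached. Second, the $O(m)$ construction bound does not follow from what you wrote: rightmost-path insertion needs the LCP between each new string and its predecessor, and you also need the fingerprint of every handle, yet even scanning a ``sorted multiset of total length $n$'' already costs $\Theta(n)$. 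The cited constructions achieve $O(m)$ by assuming the sorted input comes together with the LCP array (so the compact-trie topology is free) and with $O(1)$ access to fingerprints of prefixes of the stored strings; these assumptions hold in the paper's applications but must be made explicit for your argument to go through.
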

\begin{proof}
The structure is the z-fast trie of Belazzougui et al.~\cite[Thm.~5]{BBPV10}. A simpler construction was given by Kempa and Kosolobov \cite{KK17}, and it was then fixed, and its construction analyzed, by Navarro and Prezza \cite[Sec.~4.3]{NP18}.
\end{proof}

We will resort to a classic grammar-based indexing method \cite{CNPjcss21}, for which we need a few definitions and properties.

\begin{definition} \label{def:slpindex}
The {\em grammar tree} of an SLP $G$ is formed by pruning the parse tree, converting to leaves, for every nonterminal $X$, all the nodes labeled $X$ but the leftmost one.
An occurrence of a string $P$ in $T$ is called {\em primary} if it spans more than one leaf in the grammar tree; otherwise it is contained in the expansion of a leaf and is called {\em secondary.} If a primary occurrence of $P$ occurs in $\exp(X)$, with rule $X \to YZ$, starting within $\exp(Y)$ and ending within $\exp(Z)$, we say that the position $P[j]$ that aligns to the last position of $\exp(Y)$ is the {\em splitting point} of the occurrence.
\end{definition}

A small exception to this definition is that, if $|P|=1$, we say that its primary occurrences are those where it appears at the end of $\exp(X)$ in any leaf $X$ of the grammar tree. We now give a couple of results on primary occurrences.

\begin{lemma}[\cite{CNPjcss21}] \label{lem:primary_some}
Every pattern $P$ that occurs in $T$ has at least one primary occurrence. 
\end{lemma}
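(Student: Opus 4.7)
The plan is to split on $|P|$ and, in the main case $|P|\geq 2$, argue by locating the lowest ancestor of the occurrence in the parse tree. Fix any occurrence of $P$ in $T$ and let $X$ be the lowest parse-tree node whose expansion fully contains this occurrence. Since the SLP is in Chomsky normal form and $|P|\geq 2$, $X$ cannot be a terminal, so it has a binary rule $X\to YZ$; by the minimality of $X$, the occurrence cannot lie entirely within $\exp(Y)$ or $\exp(Z)$, so it must straddle the $Y/Z$ boundary, giving a splitting point as in Definition~\ref{def:slpindex}.

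Next, I would move from this $X$ to the \emph{leftmost} parse-tree node labeled $X$ (possibly the same node). Since nodes labeled $X$ all have the same expansion, the same occurrence of $P$ appears at the same offset within that leftmost $X$ and still straddles its internal $Y/Z$ boundary. By construction of the grammar tree, the leftmost $X$ is an internal node, and each of its subtrees rooted at $Y$ and $Z$ contains at least one grammar-tree leaf (either the child itself, if it has been converted to a leaf, or some descendant; note that every nonempty subtree of the parse tree contributes at least one node surviving in the grammar tree). The occurrence at this leftmost $X$ therefore spans at least one grammar-tree leaf from each side of the split, so it spans strictly more than one leaf and is primary.

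For the remaining case $|P|=1$, writing $P=c$, I would invoke the stated exception: a primary occurrence is one where $c$ sits at the end of $\exp(L)$ for some grammar-tree leaf $L$. Since $c$ appears in $T$, the CNF grammar contains a rule $X\to c$, so $\exp(X)=c$ ends in $c$. If $X$ has a non-leftmost occurrence in the parse tree, that occurrence becomes a grammar-tree leaf $L$ with $\exp(L)=c$, yielding a primary occurrence; otherwise $X$ appears only at its single leftmost position, and the terminal child $c$ beneath it survives as a grammar-tree leaf with expansion $c$, again yielding a primary occurrence.

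I do not anticipate a serious obstacle, as the lemma essentially unpacks definitions. The one point that deserves care is the step in the $|P|\geq 2$ case where one asserts that both the $Y$ and $Z$ subtrees under the leftmost $X$ contribute at least one leaf to the grammar tree; this is automatic from the pruning rule, but is the crucial piece that turns ``straddles $Y/Z$'' into ``spans more than one grammar-tree leaf.''
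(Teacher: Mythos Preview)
The paper does not give its own proof of this lemma; it simply cites \cite{CNPjcss21} and states the result. So there is nothing to compare your argument against within the paper itself.

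That said, your argument is correct and is essentially the standard proof. The key step in the $|P|\ge 2$ case---that under the leftmost $X$ both the $Y$- and $Z$-subtrees of the grammar tree are nonempty and hence each contributes at least one leaf---is exactly the point that makes the straddling occurrence primary, and you identified it. For $|P|=1$ your case analysis is fine, though you can streamline it: regardless of how many times $X$ (with $X\to c$) appears, the leftmost $X$ is internal in the grammar tree and its terminal child $c$ is a leaf with $\exp(c)=c$, so a primary occurrence always exists; the ``otherwise'' branch is not needed.
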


\begin{observation} \label{obs:leftmost}
If $X$ is the lowest nonterminal containing a primary occurrence of $P$ with splitting point $j$, then, by the way we form the grammar tree, this is the leftmost occurrence of $P$ under $X$ with splitting position $j$.
\end{observation}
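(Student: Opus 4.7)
The plan is to argue from the definition of the grammar tree together with the non-recursive structure of the SLP parse tree. First, I would fix the primary occurrence of $P$ in question, say starting at position $p$ in $T$, and let $X$ with rule $X \to YZ$ label the lowest parse-tree node $v$ whose subtree contains the occurrence; by the assumption of splitting point $j$, the occurrence starts within $\exp(Y)$ and ends within $\exp(Z)$, so its relative position inside $\exp(X)$ is pinned down: it sits at offset $|\exp(Y)|-j+1$.

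Next I would note that, since the SLP is in Chomsky normal form with one rule per nonterminal and the rules admit a linear acyclic ordering, distinct parse-tree nodes labeled $X$ lie in pairwise disjoint subtrees whose induced substrings are identical copies of $\exp(X)$. Every such copy $v'$ therefore contributes exactly one occurrence of $P$ under (a copy of) $X$ with splitting point $j$, namely at the same relative offset $|\exp(Y)|-j+1$ within the expansion of $v'$. Hence the candidate occurrences are in bijection with the copies of $X$ in the parse tree, and their left-to-right order in $T$ agrees with the preorder of these copies.

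The crucial step is then to invoke the grammar-tree construction: only the leftmost copy of $X$ (in preorder) remains internal, while every other copy is collapsed into a single leaf. If our primary occurrence were under a non-leftmost copy of $X$, then it would lie entirely inside the expansion of one grammar-tree leaf, contradicting the requirement in Definition~\ref{def:slpindex} that a primary occurrence span more than one leaf. Consequently the primary occurrence must be anchored at the leftmost copy of $X$, which by the preceding paragraph is also the leftmost such occurrence in $T$. The only delicate point I anticipate is verifying that copies of $X$ cannot be nested one inside another, which is precisely what the acyclic ordering of the SLP rules (the grammar generates a finite string) guarantees, so this should go through without fuss.
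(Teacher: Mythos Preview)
Your argument is correct and matches the paper's reasoning: the paper treats this as an observation whose entire justification is the phrase ``by the way we form the grammar tree,'' and your write-up simply unpacks that hint---non-leftmost copies of $X$ become grammar-tree leaves, so any occurrence under them is secondary, forcing the primary occurrence to sit at the leftmost copy, which therefore yields the leftmost text position among all $X$-aligned occurrences with splitting point $j$. The check that copies of $X$ cannot nest (via the acyclic rule ordering) is the right way to justify disjointness; nothing further is needed.
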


The index sorts all rules $X \to YZ$ twice: once by the lexicographical order of $\exp(Z)$, while collecting those expansions in a multiset $\mathcal{Z}$, and once by the co-lexicographical order of $\exp(Y)$, while collecting the reversed expansions in a multiset $\mathcal{Y}$. It builds separate z-fast tries (Lemma~\ref{lem:zfast}) on $\mathcal{Y}$ and $\mathcal{Z}$, and creates a discrete $g \times g$ grid $\mathcal{G}$, where 
the cell $(x, y)$ stores the position $p$ iff the $x$th rule $X \to YZ$ in the first order is the $y$th rule in the second order, and $T[p]$ is aligned to the last symbol of $\exp(Y)$ within the occurrence of $X$ as an internal node in the grammar tree. The grid supports orthogonal range queries. The key idea of the index is that, given a search pattern $P$, for every $1 \le j \le |P|$, the lexicographic range $[y_1,y_2]$ of $P[j+1..]$ in $\mathcal{Z}$ and the lexicographic range $[x_1,x_2]$ of the reverse of $P[..j]$ in $\mathcal{Y}$, satisfy that there is a point in the range $[x_1,x_2] \times [y_1,y_2]$ of $\mathcal{G}$ per primary occurrence of $P$ in $T$ with splitting point $P[j]$. The structure $\mathcal{G}$ can determine if the area is empty, or else return a point in it, in time $O(\log g)$.
We now build our first tool towards our goal.

\begin{lemma}
    \label{lem:primary}
    Given an SLP of size $g$ generating string $T[1..n]$ we can, in space $O(g)$ and time
    $O (n)$ or $O(g \log^2 n \log g)$,
    construct a (whp correct) data structure that,
    given $1 \leq i \leq j \leq k \le n$, in $O (\log^2 n)$ time and whp of obtaining the correct answer, finds the leftmost occurrence of $T[i..k]$ in $T$ that is a primary occurrence with splitting point $T[j]$.
\end{lemma}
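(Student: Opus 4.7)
The plan is to adapt the grammar-based index of~\cite{CNPjcss21} sketched just above the lemma, equipping its two-dimensional grid with range-minimum support so that a query returns the leftmost (rather than an arbitrary) primary occurrence with the prescribed splitting point.

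I would first apply Lemma~\ref{lem:basic_checks} to the input SLP to obtain, in $O(g)$ time and space, a balanced SLP $G$ of size $O(g)$ and height $O(\log n)$ that supports $O(\log n)$-time access and Karp-Rabin substring fingerprints and $O(\log^2 n)$-time lex/co-lex substring comparison. Walking through the grammar tree of $G$, I associate with every rule $X \to YZ$ the text position $p_X$ where, in the unique internal-node occurrence of $X$, the last character of $\exp(Y)$ lies. I then sort the $g$ rules twice: lexicographically by $\exp(Z)$ into $\mathcal{Z}$ and co-lexicographically by $\exp(Y)$ into $\mathcal{Y}$. Comparison sorting using Lemma~\ref{lem:basic_checks} runs in $O(g\log g\cdot \log^2 n) = O(g\log^2 n\log g)$ time and $O(g)$ space; the $O(n)$-time variant is obtained by interleaving the sort with a single linear top-down sweep of $T$ through $G$ that computes the required suffix-array-style ranks on the fly, keeping $O(g)$ working space. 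With the sorted multisets at hand, I build the two z-fast tries via Lemma~\ref{lem:zfast} and form the weighted $g\times g$ grid $\mathcal{G}$, placing at the cell $(x,y)$ corresponding to rule $X\to YZ$ the weight $p_X$. I augment $\mathcal{G}$ with a standard two-dimensional range-minimum structure (for instance, a wavelet tree over the weighted permutation with RMQ-enhanced bitmaps), so that minimum-weight queries in orthogonal rectangles cost $O(\log g)$ time in $O(g)$ space.

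To answer a query $(i,j,k)$, I query the trie on $\mathcal{Y}$ with $T[i..j]^R$ and the trie on $\mathcal{Z}$ with $T[j+1..k]$; every Karp-Rabin fingerprint requested by Lemma~\ref{lem:zfast} is over a substring of $T$ and is computed in $O(\log n)$ time by Lemma~\ref{lem:basic_checks}, so each call runs in $O(\log^2 n)$ and returns a coordinate range. A range-minimum query on $\mathcal{G}$ within the resulting rectangle returns, in $O(\log g)$ further time, the smallest weight $p_X$ among rules whose $Y|Z$ boundary is a valid split of $T[i..k]$ at $T[j]$; by the indexing principle combined with Observation~\ref{obs:leftmost}, the corresponding primary occurrence starts at text position $p_X-(j-i)$ and it is the leftmost such occurrence, giving total query time $O(\log^2 n)$.

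The main obstacle is checking that aggregating by per-rule minima is correct. Because each rule $X\to YZ$ appears exactly once as an internal node of the grammar tree, it contributes at most one primary occurrence of $T[i..k]$ with the prescribed splitting pattern, and every primary occurrence of $T[i..k]$ splitting at $T[j]$ arises this way through its lowest binary ancestor in the parse tree; hence the minimum over matching rules coincides with the leftmost occurrence. A secondary concern is the false-positive behaviour of Lemma~\ref{lem:zfast} when the true range is empty: a spuriously returned grid point can be discarded by one Karp-Rabin comparison of $T[i..k]$ against the candidate occurrence, again in $O(\log n)$, so errors on empty ranges do not propagate and the whp guarantees compose. The trickier construction step is hitting the $O(n)$ time bound in only $O(g)$ space, which forces the sorting to piggyback on the linear sweep through $G$ rather than using an $O(n)$-space intermediate suffix array.
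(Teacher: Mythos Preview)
Your proposal is correct and follows essentially the same route as the paper: build the grammar-based index of~\cite{CNPjcss21} (sorted multisets $\mathcal{Y},\mathcal{Z}$, z-fast tries, grid $\mathcal{G}$), augment $\mathcal{G}$ with a two-dimensional range-minimum structure over the stored positions $p_X$, answer a query by two trie searches plus one grid RMQ, and verify a returned candidate with a single Karp--Rabin comparison to neutralize the empty-range false positives of Lemma~\ref{lem:zfast}. The paper cites~\cite{navarro2014wavelet,Cha88} for an $O(\log^{1+\varepsilon} g)$-time, $O(g)$-space RMQ built in $O(g\log g)$ time, and cites~\cite{NP18} for the $O(n)$-time, $O(g)$-space sorting of $\mathcal{Y}$ and $\mathcal{Z}$; your ``linear top-down sweep computing suffix-array-style ranks'' is the one place where you are noticeably more hand-wavy than the paper, so you may want to point to that reference rather than sketch it.
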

\begin{proof}
We build the components $\mathcal{Y}$, $\mathcal{Z}$, and $\mathcal{G}$ of the described index, correctly whp: we sort the sets $\mathcal{Y}$ and $\mathcal{Z}$ in $O(g)$ space and $O(n)$ time, we build the z-fast tries in $O(g)$ space and time, and we build the grid data structure in $O(g)$ space and $O(g\sqrt{\log g})$ time (see Navarro and Prezza \cite[Sec.~4.3]{NP18} for all those time an space complexities). We note that, by using Lemma~\ref{lem:basic_checks}, we can also do the sorting correctly whp in $O(g)$ space and $O(g\log g \cdot \log^2 n)$ time.

We use those structures to search for $P = T[i..k]$ with splitting point $T[j]$, that is, we search the z-fast trie of $\mathcal{Z}$ for $T[j+1..k]$ and the z-fast trie of $\mathcal{Y}$ for $T[i..j]$ reversed, in time $O(f_h \log n)$; recall Lemma~\ref{lem:zfast}. Since the substrings of $T[i..j]$ are also substrings of $T$, we can compute the Karp-Rabin hash of any substring of $T[i..j]$ in time $f_h=O(\log n)$ by Lemma~\ref{lem:basic_checks}, so this first part of the search takes time $O(\log^2 n)$. Recall from Lemma~\ref{lem:zfast} that this search can yield incorrect results (only if the ranges sought are empty, whp).

We now use $\mathcal{G}$ to determine if there are points in the corresponding area. If there are none, then whp $T[i..k]$ does not occur in $T$ with splitting point $T[j]$. If there are some, then we obtain the value $p$ associated with any point in the range, and compare the Karp-Rabin hash of $T[p-(j-i)..p+(k-j)]$ with that of $T[i..k]$. If they differ, then $T[i..k]$ has no occurrences with splitting point $T[j]$; otherwise whp the z-fast tries gave the correct range and there are occurrences. This check takes $O(\log n)$ time.

Once we know that (whp) there are occurrences with splitting point $T[j]$, we want the leftmost one. Each point within the grid range may correspond to a different rule $X \to YZ$ that splits $T[i..k]$ at $T[j]$; therefore, by Observation~\ref{obs:leftmost}, we want the minimum of the $p$ values stored for the points within the range.
This kind of two-dimensional range minimum query can be solved in time $O(\log^{1+\varepsilon} g)$ and $O(g)$ space, for any constant $\varepsilon>0$, with an enhancement of $\mathcal{G}$ that uses $O(g)$ space and can be built in time $O(g\log g)$ \cite{navarro2014wavelet,Cha88}. This completes the query. \qed
\end{proof}

Finally, we will need the following observation on the monotonicity of occurrences in $T$, even when we stick to some splitting point.

\begin{observation} \label{obs:prefix}
If $P$ has a primary occurrence in $T$ with splitting point $P[j]$, then any prefix $P'=P[..k]$, for any $j < k < |P|$, also has a primary occurrence with splitting point $P'[j]$.
\end{observation}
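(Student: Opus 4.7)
The plan is to re-use the physical occurrence of $P$ itself as a certificate for the claimed primary occurrence of $P'$. Let $X \to YZ$ be a nonterminal witnessing the given primary occurrence of $P$ with splitting point $P[j]$: the occurrence lies inside $\exp(X)$, begins in $\exp(Y)$, ends in $\exp(Z)$, and $P[j]$ is aligned with the last position of $\exp(Y)$. I would fix this node $X$ and this starting position in $T$ throughout the argument.

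Since $P' = P[..k]$ is a prefix of $P$, it occurs at the same starting position. First I would check the two endpoints of this new occurrence of $P'$ against the subtrees below $X$. Its start coincides with that of $P$, hence lies in $\exp(Y)$. Its end is at offset $k-1$ from the start; because $k > j$ this end lies strictly past the last position of $\exp(Y)$, while $k < |P|$ guarantees it does not extend past the end of $P$'s occurrence, which itself sits within $\exp(Z)$. Hence the $P'$-occurrence sits in $\exp(X)$, begins in $\exp(Y)$, and ends in $\exp(Z)$, so it crosses the $Y$-$Z$ boundary, spans at least two grammar-tree leaves, and is therefore primary.

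Finally, because $P'[j] = P[j]$ is still aligned with the last position of $\exp(Y)$ in this very same occurrence, the splitting point is exactly $P'[j]$, as required. The only point demanding a bit of care is that both inequalities $j < k$ and $k < |P|$ are used — the lower one to force the $Y$-$Z$ crossing, the upper one to keep the end of $P'$ inside $\exp(Z)$ — but once these are tracked there is no real obstacle, since the statement is essentially a direct unpacking of the definition of splitting point.
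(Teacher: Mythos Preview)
Your proof is correct and follows exactly the same approach as the paper: fix the rule $X \to YZ$ witnessing the primary occurrence of $P$, and observe that the prefix $P'=P[..k]$ at the same starting position still starts in $\exp(Y)$ (same start), still ends in $\exp(Z)$ (because $k>j$), and still has position $j$ aligned with the end of $\exp(Y)$. Your write-up simply spells out in more detail what the paper compresses into ``satisfies the same conditions''; the one superfluous remark is that the bound $k<|P|$ is not actually needed for the argument (even $k=|P|$ would trivially work), but this does not affect correctness.
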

\begin{proof}
Let the primary occurrence of $P$ span $Y$ and $Z$ for a rule $X \to YZ$, that is, $P$ appears in $\exp(X)$ and the occurrence starts in $\exp(Y)$ and ends in $\exp(Z)$, with $P[j]$ aligned to the last position of $\exp(Y)$. Then it follows that $P'=P[..k]$ satisfies the same conditions, so a primary occurrence of $P'$ with splitting point $P'[j]$ starts at the same text position.\qed
\end{proof}

\begin{algorithm}[t]
  \begin{algorithmic}[1]
    \State $i \gets 1$
    \While {$i \le n$}
        \If{$T[i]$ does not occur in $T[1..i-1]$}
            \State \textbf{output} literal phrase with character $T[i]$
            \State $i \gets i+1$
        \Else
            \State $j \gets i$, $k \gets i-1$
            \Repeat 
                \State $k \gets k+1$
                \State // invariant: $T[i..k]$ occurs to the left of $i$ with splitting point $T[j]$ 
                \State // \hspace{1.4cm} and not with any in $T[i..j-1]$
                \While{$k < n \textbf{ and } \mathrm{leftmost}(i,j,k+1) < i$} 
                    $k \gets k+1$ 
                \EndWhile
                \State $t \gets \mathrm{leftmost}(i,j,k)$
                \If{$k < n$}
                    \While{$j \le k+1 \textbf{ and } \mathrm{leftmost}(i,j,k+1) \le i$}
                        $j \gets j+1$
                    \EndWhile
                \Else $~j \gets n+1$
                \EndIf
                \State // invariant holds for $k+1$ if $k<n$
                \State // if $j>k$ then $T[i..k+1]$ does not occur to the left of $i$
            \Until{$j > k$}
            \State \textbf{output} phrase for $T[i..k]$ with source starting at $T[t..]$
            \State $i \gets k+1$
        \EndIf
    \EndWhile
  \end{algorithmic}
\caption{Obtaining the LZ parse of $T[1..n]$ from an SLP generating $T$. Function $\mathrm{leftmost}(i,j,k+1)$ uses Lemma~\ref{lem:primary} to return the leftmost occurrence of $T[i..k]$ in $T$ with splitting point $T[j]$, or $n+1$ if not found.}
\label{alg:lzparse}
\end{algorithm}

We are now ready to give the final result. Algorithm~\ref{alg:lzparse} gives the pseudocode.

\begin{theorem}
\label{thm:npolylogn}
Given an SLP with $g$ rules for a text $T [1..n]$, whp we can build the LZ parse of $T$ in $O (n \log^2 n)$ time and within $O (g)$ space.
\end{theorem}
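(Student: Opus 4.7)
The plan is to analyze Algorithm~\ref{alg:lzparse} in two stages: first establish correctness, then bound time and space. The correctness rests on combining Lemma~\ref{lem:primary_some} (every occurring pattern has at least one primary occurrence) with Observation~\ref{obs:prefix} (monotonicity of admissible splitting points under prefixes) so that finding the longest prefix of $T[i..]$ that occurs to the left of $i$ reduces to a double linear scan over a length $k$ and a splitting point $j$, each query handled by the oracle of Lemma~\ref{lem:primary}.

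First I would formally verify the loop invariant stated on lines 10--11: at the top of each outer iteration, $T[i..k]$ has a primary occurrence to the left of $i$ with splitting point $T[j]$, and no primary occurrence to the left of $i$ with any splitting point in $T[i..j-1]$. The inner \textbf{while} on line 11 extends $k$ while the current splitting point $j$ still witnesses an earlier primary occurrence; the inner \textbf{while} on line 14 then advances $j$ past all splitting points that fail for the new length $k+1$. Observation~\ref{obs:prefix} is essential here: when line 14 stops advancing $j$ because $\mathrm{leftmost}(i,j,k+1)<i$, the same splitting point $j$ must also have worked for the shorter $T[i..k]$, so the invariant is restored at the next iteration. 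Termination with $j>k$ certifies, via Lemma~\ref{lem:primary_some}, that $T[i..k+1]$ has no earlier occurrence at all, so $T[i..k]$ is the maximal LZ phrase and the source position $t$ recorded on line 13 is valid.

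For complexity, I would argue that across the entire execution, the three indices $i$, $j$, $k$ each advance monotonically in a telescoping sense: at the start of each phrase, $j$ and $k$ are reset to values within the just-emitted phrase, so the total number of increments to $k$ and $j$ summed over all phrases is $O(n)$, while $i$ itself advances at most $z$ times. Each iteration of either inner \textbf{while} triggers $O(1)$ calls to $\mathrm{leftmost}$, each costing $O(\log^2 n)$ by Lemma~\ref{lem:primary}, yielding $O(n\log^2 n)$ total query time, which dominates the $O(n)$ construction cost from the same lemma. The space is $O(g)$ throughout, and the whp guarantee propagates from the Karp--Rabin fingerprints and z-fast trie searches underlying Lemma~\ref{lem:primary}, together with a union bound over the $O(n)$ queries (absorbed into the constant $c$ of the whp bound).

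The main obstacle I expect is the amortization argument establishing that the two nested \textbf{while} loops make only $O(n)$ total progress, rather than $O(n)$ progress per phrase; this is exactly where Observation~\ref{obs:prefix} prevents $j$ from ever having to be rewound when $k$ grows. Beyond that, only off-by-one bookkeeping remains: the single-character case (lines 3--5), the $k=n$ boundary handled on line 13, the convention $\mathrm{leftmost}=n+1$ when no matching primary occurrence exists, and the initialization $k\gets i-1$, $j\gets i$ that encodes the empty prefix. None of these require new ideas, only careful case analysis inside the invariant proof.
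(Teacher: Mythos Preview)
Your proposal is correct and follows essentially the same approach as the paper's own proof: analyze Algorithm~\ref{alg:lzparse}, maintain the stated invariant using Lemma~\ref{lem:primary_some} and Observation~\ref{obs:prefix}, and observe that $j$ and $k$ telescope across phrase boundaries (since after emitting $T[i..k]$ the next phrase reinitializes $j\gets k+1$ and $k\gets k$), so the total number of $\mathrm{leftmost}$ calls is $O(n)$ and the running time is $O(n\log^2 n)$. The paper's proof is terser on correctness but identical in structure; your more explicit treatment of the invariant and the amortization is sound.
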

\begin{proof}
We first build the data structures of Lemma~\ref{lem:primary} in $O(n)$ time and $O(g)$ space, correctly whp. We then carry out the LZ parse by sliding three pointers left-to-right across $T$, $i \le j \le k$, as follows:
suppose that the parse for $T [1..i-1]$ is already constructed, so a new phrase must start at $i$.
We first check whether $T [i]$ appeared already in $T [1..i-1]$,\footnote{This is easily done in $O(1)$ time and $|\Sigma| \in O(g)$ space by just storing an array with the leftmost occurrence of every distinct symbol in $T$. This array is built in $O(g)$ time from the leaves of the grammar tree.}
if not then we create a one-letter phrase and proceed to $i+1$.

If $T[i]$ has occurred earlier, we start the main process of building the next phrase. The invariant is that we have found $T[i..k]$ starting before $i$ in $T$ with splitting point $T[j]$, and there is no primary occurrence of $T[i..k]$ (nor of $T[i..k']$ for any $k'>k$, by Observation~\ref{obs:prefix}) with a splitting point in $T[i..j-1]$. To establish the invariant, we initialize $j$ to $i$ and try $k$ from $i$ onwards, using Lemma~\ref{lem:primary} and advancing $k$ as long as the leftmost occurrence of $T[i..k]$ with splitting point $T[i]$ starts to the left of $i$.

Note that we will succeed the first time, for $k=i$. We continue until we reach $k=n$ (and output $T[i..n]$ as the last phrase of the LZ parse) or we cannot find $T[i..k+1]$ starting before $i$ with splitting point $T[i]$. We then try successive values of $j$, from $i+1$ onwards, using Lemma~\ref{lem:primary} to find $T[i..k+1]$ starting before $i$ with splitting point $T[j]$. If we finally succeed for some $j \le k$, we reestablish the invariant by increasing $k$ and return to the first loop, which again increases $k$ with fixed $j$, and so on.

When $j$ reaches $k+1$, it follows that $T[i..k]$ occurs before $i$ and $T[i..k+1]$ does not, with any possible splitting point. The next phrase is then $T[i..k]$, which we output, reset $i = k+1$, and resume the parsing.

Since $j$ and $k$ never decrease along the process, we use Lemma~\ref{lem:primary} $O(n)$ times, for a total time of $O(n\log^2 n)$ to build the LZ parse.\qed
\end{proof}

\section{Building the LZ parse from an SLP in $\tilde{O}(g z)$ time}
\label{sec:lz77gzlogn}
If $T$ is highly compressible, the running time $O(n\log^2 n)$ in Theorem~\ref{thm:npolylogn} could be exponential in the size $O (g)$ of the input.  We can build the parse in $O (g z \log^2 \frac n z) \subset \mathrm{poly} (g)$ time by using, instead of the machinery of the preceding section, Je\.z's~\cite{jez2015faster} algorithm for fully-compressed pattern matching. We will only balance the SLP if needed \cite{GJL21} so that its height is $O(\log n)$. We start by reminding some tools.

\begin{lemma}[\cite{Ryt03}] \label{lem:avl}
Given an SLP of height $h$ for $T$, we can in $O(h)$ time and space produce an SLP of size $O(h)$ for any desired substring $T[i..j]$ (without modifying the SLP of $T$).
\end{lemma}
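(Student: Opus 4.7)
The plan is to extract $T[i..j]$ by identifying the ``splitting'' nonterminal in the derivation tree of $T$ and then building compact SLPs for a suffix and a prefix of its two children's expansions. First, starting at the start symbol, I descend along the unique root-to-node path whose expansion covers $[i, j]$, stopping at the lowest nonterminal $X \to YZ$ for which position $i$ lies in $\exp(Y)$ and position $j$ lies in $\exp(Z)$. This walk visits at most $h$ nodes, and at each step I choose the child by comparing $i$ and $j$ against the precomputed expansion lengths of the children (part of the fixed bookkeeping of the original SLP). Consequently, $T[i..j]$ equals a suffix of $\exp(Y)$ concatenated with a prefix of $\exp(Z)$.

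The core subroutine is to build, for a nonterminal $U$ of the original SLP and a length $k \le |\exp(U)|$, a new nonterminal $U_{[\le k]}$ whose expansion equals $\exp(U)[1..k]$. If $U \to VW$, there are two cases. When $k \le |\exp(V)|$, I recurse on $V$ with length $k$ and return the resulting nonterminal directly, adding no new rule. Otherwise, I recurse on $W$ with length $k' = k - |\exp(V)|$ to obtain $W_{[\le k']}$, and then add a single fresh rule $U_{[\le k]} \to V\, W_{[\le k']}$ that re-uses $V$ from the original SLP. Each recursive call descends one level of the derivation tree, so the recursion terminates after at most $h$ steps and produces $O(h)$ new rules in $O(h)$ time and space. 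Suffixes of $\exp(Y)$ are handled symmetrically by a mirror-image procedure, and one final rule concatenates the two extracted pieces to form the start symbol of the new SLP.

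The main obstacle I anticipate is handling corner cases cleanly so as not to inflate the size: when the prefix or suffix ends exactly on a boundary inside $\exp(V)$, I must avoid introducing a redundant rule and instead reuse the existing nonterminal; and when $i$ and $j$ fall inside the same child at some intermediate level of the initial descent, the splitting step collapses and only one of the two descents is needed (with a substring extracted from a single nonterminal instead of a prefix-plus-suffix). Keeping the result in Chomsky normal form reduces to ensuring that each fresh nonterminal $U_{[\le k]}$ appears as the left-hand side of exactly one rule, which the construction enforces by introducing it only when a new rule is actually created. Because both descents traverse paths of length at most $h$ and add $O(1)$ rules per step, the overall size, time, and space of the construction are all $O(h)$, and the original SLP is only read, never modified.
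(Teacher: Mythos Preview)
Your proof is correct and follows the standard substring-extraction technique from Rytter~\cite{Ryt03}: locate the straddling nonterminal, then build $O(h)$ fresh rules for a suffix of the left child and a prefix of the right child by a single root-to-leaf descent on each side, reusing original nonterminals wherever a full subtree is kept. The paper itself does not prove this lemma; it simply imports it from~\cite{Ryt03}, so there is nothing to compare against beyond noting that your argument is precisely the one underlying that citation.
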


Note that the SLP constructed in the Lemma above may use some of the nonterminals of the original SLP for $T$, i.e.\ its size is in principal $g + O(h)$.

\begin{lemma}[{\cite{jez2015faster}}] \label{lem:fcpm}
If $T$ and $P$ have SLPs of size $g$ and $g'$, then we can find the leftmost occurrence of $P$ in $T$ in time $O((g+g')\log|P|)$, within $O(g+g')$ space.
\end{lemma}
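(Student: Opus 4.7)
The plan is to use Je\.z's \emph{recompression} framework, which iteratively compresses both $T$ and $P$ in lockstep until $P$ collapses to a single letter. Each phase performs two operations on the current string (represented by an SLP): \emph{block compression}, which replaces every maximal run $a^k$ by a fresh letter encoding $(a,k)$, and \emph{pair compression}, which, after partitioning the alphabet into a ``left'' set $L$ and a ``right'' set $R$, replaces every occurrence of a pair $ab$ with $a\in L$, $b\in R$ by a fresh letter. With an appropriately randomized $L/R$ partition, a constant fraction of the positions of $P$ are removed per phase, so after $O(\log|P|)$ phases $P$ becomes a single symbol $\sigma$, at which point its leftmost occurrence in the (compressed) $T$ can be read off by a single bottom-up pass over the SLP, and the chain of compressions can be undone to recover the original position in $T$.

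The first step I would take is to apply each phase to the SLP of $P$, reading it directly off the rules: since $P$ is a single string, one can explicitly rewrite its SLP in $O(g')$ time per phase. The second step is to apply the same compression rules to the SLP of $T$. The delicate point is that a pair $ab$ we wish to compress may be ``crossing'', i.e., in some rule $X\to YZ$ we have $a$ at the end of $\exp(Y)$ and $b$ at the start of $\exp(Z)$. The standard fix is \emph{popping}: for each nonterminal $Y$, expose its first and/or last terminal letter into the parent rule, rewriting $X\to YZ$ as $X\to \alpha\, Y'\, \beta\, Z'\, \gamma$ (then re-binarized). After popping, every crossing pair becomes explicit in some rule's right-hand side, and compression becomes a purely local rewrite on the rule set; block compression is handled analogously, with popping of maximal $a$-prefixes and suffixes.

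The main obstacle, and the heart of the argument, is to keep the total SLP size $O(g+g')$ and the total work $O((g+g')\log|P|)$ across all $O(\log|P|)$ phases. Naive popping can blow up rule sizes, so popping must be restricted to letters that actually participate in a pair scheduled for compression, and rules that become identical or trivial must be merged or eliminated. One shows that the amortized cost per phase is $O(g+g')$ by a potential argument over the rule-size and ``exposed letter'' counts, and that the SLP size after cleanup returns to $O(g+g')$. Correctness is then an induction: each compression preserves occurrences of $P$ in $T$ modulo an $O(1)$-letter halo at phase boundaries, which a constant-size lookup around the candidate position in each phase resolves.

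Finally, once $P$ has been reduced to a single letter $\sigma$, I would compute, for every nonterminal $X$ of the current SLP of $T$, the leftmost position at which $\sigma$ appears in $\exp(X)$ (or $\bot$ if absent), in $O(g)$ time bottom-up. Propagating these positions downward along the compression history---which is recorded as an $O((g+g')\log|P|)$-size DAG whose arcs store, for each compressed pair or block, the displacement it induced---translates the answer back into an index of the original $T$ in $O((g+g')\log|P|)$ additional time. Total space is $O(g+g')$ because at any single moment we only need the current SLPs and one layer of the compression log.
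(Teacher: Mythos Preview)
The paper does not prove this lemma at all: it is quoted verbatim as a black-box result from Je\.z~\cite{jez2015faster}, with no argument beyond the citation. So there is no ``paper's own proof'' to compare against; your sketch is an attempt to reconstruct the cited result.

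As a high-level summary of recompression your outline is on target: alternating block and pair compression, popping to eliminate crossing pairs, $O(\log|P|)$ phases until $P$ collapses to a single symbol, and amortized $O(g+g')$ work per phase. Two points, however, are not right as stated. First, your space accounting is internally inconsistent: you say the compression history is recorded as an $O((g+g')\log|P|)$-size DAG and then claim $O(g+g')$ total space. In Je\.z's algorithm no such history is stored; instead one maintains, for every nonterminal of the current SLP of $T$, the length of its expansion \emph{in the original text}, updating these lengths through each phase. Once $P$ has become a single letter $\sigma$, a bottom-up pass yields the leftmost position of $\sigma$ directly as an index into the original $T$, with no unwinding required. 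Second, the correctness sketch (``preserves occurrences modulo an $O(1)$-letter halo'') misstates the mechanism: the whole point of synchronizing the compression of $P$ and $T$, together with popping the first and last letters of $P$ before each pair/block step, is that occurrences are preserved \emph{exactly}, not up to a halo; the partition $L/R$ is chosen (or $P$'s boundary letters are popped) precisely so that no pair straddling the boundary of an occurrence of $P$ is ever compressed.
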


Assume again we have already parsed $T[1..i-1]$, and aim to find the next phrase, $T[i..k]$. We will exponentially search for $k$ using $O(\log(k-i))$ steps. Each step implies determining whether some $T[i..j]$ occurs in $T$ starting to the left of $i$ (so that $k$ is the maximum such $j$). To do this we exploit the fact that our SLP is of height $h = O(\log n$) and use Lemma~\ref{lem:avl} to extract an SLP for $T[i..j]$, of size $g' \le g + O(h)= g + O(\log n)$, in $O(h) = O(\log n)$ time\footnote{Rytter \cite{Ryt03} rebalances the grammar he extracts, but we do not need to do this.}. We then search for the SLP of $T[i..j]$ in the SLP of size $g$ of $T$ using Lemma~\ref{lem:fcpm}, in time $O((g+g') \log (j-i)) \subseteq O(g\log(k-i))$ (because $g' = \subseteq O(g)$, as $g$ is always $\Omega(\log n)$). By comparing the leftmost occurrence position with $i$ we drive the exponential search, finding $k$ in time $O (g \log^2 (k - i))$ and space $O (g)$.

Repeating this for each LZ phrase we get
$\sum_{i=1}^z g \log^2 n_i$, where $n_1, n_2, \ldots, n_z$ denote the consecutive phrase lengths.
By Jensen's inequality (since $\log^2(\cdot)$ is concave), the sum is maximized when all $n_i = n/z$.

\begin{theorem}
\label{thm:SLP_to_LZ77}
Given a straight-line program with $g$ rules for a text $T [1..n]$ whose LZ parse consists of $z$ phrases, we can build that parse in $O (g z \log^2 (n/z))$ time and $O (g)$ space.
\end{theorem}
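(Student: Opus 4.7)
The plan is essentially laid out in the paragraphs preceding the theorem, so I would execute exactly that outline: first normalize the input SLP to one of height $h = O(\log n)$ using the Ganardi et al.\ construction (keeping the size at $O(g)$), then sweep left to right, emitting LZ phrases one at a time using the fully-compressed pattern matcher of Lemma~\ref{lem:fcpm} as a black box.

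To extract a single phrase starting at position $i$, I would exponentially search for the largest $k$ such that $T[i..k]$ has an occurrence starting strictly before $i$. Each step of the search picks a candidate endpoint $j$ and tests whether $T[i..j]$ has such an occurrence. The test extracts an SLP for $T[i..j]$ of size $g' \le g + O(h) = O(g)$ in $O(\log n)$ time via Lemma~\ref{lem:avl}, then invokes Lemma~\ref{lem:fcpm} to compute the leftmost occurrence of $T[i..j]$ anywhere in $T$ in $O((g+g')\log(j-i)) = O(g\log(j-i))$ time and $O(g)$ space; since we want the leftmost occurrence to be strictly less than $i$, comparing that position to $i$ gives a correct yes/no answer. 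This drives the exponential search in $O(\log(k-i))$ probes, so the whole phrase is found in $O(g \log^2 n_i)$ time, where $n_i$ is the phrase length. A literal phrase is emitted (in $O(1)$ extra work) when $T[i]$ does not appear in $T[1..i-1]$, tested as in the previous section.

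Summing over the $z$ phrases yields $\sum_{i=1}^{z} O(g\log^2 n_i)$. Since $\log^2(\cdot)$ is concave on $[e,\infty)$ and $\sum_i n_i = n$, Jensen's inequality gives $\sum_i \log^2 n_i \le z\log^2(n/z)$ (the $O(1)$ short phrases contribute only $O(z)$), so the total running time is $O(gz\log^2(n/z))$. The space stays $O(g)$ because at any moment we hold only the balanced SLP of $T$, plus a single transient sub-SLP of size $O(g)$ for the current candidate substring, plus the $O(g)$ working memory used internally by Lemma~\ref{lem:fcpm}; all transient structures are discarded between iterations.

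The main obstacle I expect is nothing deep but rather the bookkeeping: verifying that ``leftmost occurrence in all of $T$ is $< i$'' is the right predicate for the exponential search (it is, because leftmost-everywhere is at most leftmost-before-$i$), handling the boundary at $k = n$ and the source of a phrase (which is simply the leftmost-occurrence witness returned by Lemma~\ref{lem:fcpm} at the final accepted $j$), and checking that the concavity argument applies up to the harmless $O(z)$ slack from very short phrases. Once those are in place, the claim follows directly from the cited lemmas.
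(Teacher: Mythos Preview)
Your proposal is correct and follows essentially the same approach as the paper: balance the SLP, then per phrase run an exponential search whose probes extract a sub-SLP via Lemma~\ref{lem:avl} and feed it to Je\.z's matcher (Lemma~\ref{lem:fcpm}), finishing with Jensen's inequality on $\sum_i \log^2 n_i$. Your added bookkeeping remarks (the leftmost-occurrence predicate, the concavity domain caveat, and discarding transient sub-SLPs) only make the argument tighter than the paper's own prose.
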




\section{Building an LCP from an SLP in $\tilde{O}(n)$ time}
\label{sec:locally_consistent_parsing}

Locally consistent grammars (LCGs) are actually run-length context-free grammars, that is, they allow rules $X \to Y_1\cdots Y_t$ (of size $t$) and run-length rules
of the form $X \rightarrow Y^t$, equivalent to $X \rightarrow Y \cdots Y$ ($t$
copies of $Y$), of size 2.
A particular kind of LCG can be obtained from $T$ with the following procedure~\cite{KNOlatin22}. First, define 
$\ell_k = (4/3)^{\lceil k/2\rceil-1}$ and call $S_0 = T$.
Then, for increasing levels $k>0$, create $S_k$ from $S_{k-1}$ as follows:
\begin{enumerate}
\item If $k$ is odd, find the maximal runs of (say, $t>1$ copies of) equal 
symbols $Y$ in $S_{k-1}$ such that $|exp(Y)| \le \ell_k$, create a new grammar
rule $X \rightarrow Y^t$, and replace the run by $X$. The other symbols are
copied onto $S_k$ as is.
\item If $k$ is even, generate a function $\pi_k$ that randomly reorders the 
symbols of $S_{k-1}$ and define local minima as the positions $1<i<|S_{k-1}|$ 
such that $\pi_k(S_{k-1}[i-1]) > \pi_k(S_{k-1}[i])<\pi_k(S_{k-1}[i+1])$. Place a
block boundary after each local minimum, and before and after the symbols 
$Y$ with $|exp(Y)| > \ell_k$. Create new rules for the resulting blocks of 
length more than 1 and replace them in $S_k$ by their corresponding 
nonterminals. Leave other symbols as is.
\end{enumerate}

Our plan is to extract $T$ left to right from its SLP, in $O(n)$ time, and carry out the described process in streaming form.
The only obstacle to perform the process at level $k$ in a single left-to-right
pass is the creation of the functions $\pi_k$ without knowing in advance the
alphabet of $S_{k-1}$. We can handle this by maintaining two balanced trees. 
The first, $T_{id}$, is sorted by the actual symbol identifiers, and stores for each 
symbol a pointer to its node in the second tree, $T_{pos}$. The tree $T_{pos}$ is sorted by the current
$\pi_k$ values (which evolve as new symbols arise), that is, the $\pi_k$ value of a symbol is its inorder position in $T_{pos}$. We can know the current 
value of 
a symbol in $\pi_k$ by going up from its node in $T_{pos}$ to the root, adding up one plus the number of nodes in the left subtree of the nodes we reach from their right child (so $T_{pos}$ stores
subtree sizes to enable this computation). Two symbols are then compared in
logarithmic time by 
computing their $\pi_k$ values using $T_{pos}$.

When the next symbol is not found in $T_{id}$, it is inserted in both 
trees. Its rank $r$ in $T_{pos}$ is chosen at random in $[1,|T_{id}|+1]$. We use 
the subtree sizes to find the insertion point in $T_{pos}$, starting from the root:
let $t_l$ be the size of the left child of a node. If $r \le t_l+1$ we continue
by the left child, otherwise we subtract $t_l+1$ from $r$ and continue by the
right child. The balanced tree rotations maintain the ranks of the nodes, so the tree can be 
rebalanced after the insertion adds a leaf.

Our space budget does not allow us maintaining the successive strings $S_k$.
Rather, we generate $S_0 = T$ left to right in linear time using the given SLP and have one iterator per level $k$ (the number of levels until having
a single nonterminal is logarithmic~\cite[Remark 3.16]{KNOlatin22}). 
Each time the process at some level $k-1$ produces a new 
symbol, it passes that new symbol on to the next level, $k$. When the last 
symbol of $T$ is consumed, all the levels in turn close their processes, bottom-up; 
the LCG comprises the rules produced along all levels. 

The total space used is proportional to the number of distinct symbols across all the
levels of the grammar. This can be larger than the grammar size because symbols $X$ with $|exp(X)| > \ell_k$ are not replaced in level $k$, so they exist in the next levels as well. To avoid this, we perform a twist
that ensures that every distinct grammar symbol is stored only in $O(1)$ levels. 
The twist is not to store in the trees the symbols that cannot form groups in this level, 
that is, those $X$ for which $|exp(X)| > \ell_k$. Since then the symbols stored in the 
tree for even levels $k$ are forced to form blocks (no two consecutive minima can exist),
they will no longer exist in level $k+1$. Note that the sizes of the trees used for the 
symbols at level $k$ are then proportional to the number of nonterminals of that level 
in the produced grammar. 

There is a deterministic bound $O(\delta\log\frac{n}{\delta})$ on the total 
number of nonterminals in the generated grammar~\cite[Corollary 3.12]{KNOlatin22}, and thus 
on the total sizes of the balanced
trees. Here, $\delta$ is the compressibility measure based on substring 
complexity, and size $O(\delta\log\frac{n}{\delta})$ is optimal for every
$n$ and $\delta$~\cite{KNOlatin22,KNPtit22}. The size $g_{lc}$ of the produced LCG could 
be higher, as for some choices of letter permutations on various levels some right-hand of the productions can be of not-constant length, however,
but it is still $O(\delta\log\frac{n}{\delta})$ in 
expectation and with high probability~\cite[Theorem 3.13]{KNOlatin22}. Because the sum of the lengths of the 
strings $S_k$ is $O(n)$~\cite[Corollary 3.15]{KNOlatin22}, we produce the LCG in time 
$O(n\log g_{lc})$; the $\log g_{lc}$ comes from the cost of balanced tree operations.

\begin{theorem} 
Given an SLP with $g$ rules for a text $T [1..n]$, we can build whp a LCG of size 
$g_{lc} = O(\delta\log\frac{n}{\delta})$ for $T$ within $O(n\log g_{lc})$ time and 
$O(g+g_{lc})$ space.
\end{theorem}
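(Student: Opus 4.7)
The plan is to run the locally consistent parsing described above in a streaming fashion, producing the symbols of $S_0=T$ one by one from the input SLP (in $O(n)$ total time by a single left-to-right traversal of its parse tree, which is feasible because the balanced SLP has height $O(\log n)$) and feeding them into a pipeline of per-level workers. Each worker for level $k$ consumes symbols of $S_{k-1}$ as soon as they arrive, emits the symbols of $S_k$ as soon as they become determined, and forwards them to the worker for level $k+1$. Since the number of levels before a single nonterminal is reached is $O(\log(n/\delta))$ and the total length of the strings $S_k$ is $O(n)$ by~\cite{KNOlatin22}, the total number of symbols flowing through the pipeline is $O(n)$.

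The main technical issue is how each even-level worker generates its random permutation $\pi_k$ without knowing the alphabet of $S_{k-1}$ in advance. I would resolve this with the two-balanced-tree trick already sketched: a tree $T_{id}$ keyed by symbol identifier, and a tree $T_{pos}$ whose inorder gives the current $\pi_k$, with subtree sizes stored to support rank operations. When a symbol appears for the first time, I draw a uniform random rank $r$ in $[1, |T_{id}|+1]$ and insert it at that position in $T_{pos}$ by descending using the subtree sizes; subsequently, the value $\pi_k(Y)$ is retrieved in $O(\log g_{lc})$ time by walking from the node of $Y$ to the root. This suffices to detect local minima in $S_{k-1}$ by comparing $\pi_k$ values of any three consecutive symbols, and since the ranks drawn are mutually independent uniforms, it yields exactly the same distribution of parses as if $\pi_k$ were fixed in advance.

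To keep the space in $O(g+g_{lc})$ rather than proportional to the total number of symbols across all levels, I would apply the described twist: at level $k$, a nonterminal $X$ with $|\exp(X)|>\ell_k$ is never inserted into $T_{id}$ or $T_{pos}$. At odd levels such an $X$ simply cannot start a run anyway; at even levels it is treated as an enforced block boundary on both sides, so any such $X$ in $S_{k-1}$ that is not by itself becomes the start or end of a block and is thus absorbed into a new nonterminal by level $k+1$. Consequently each grammar symbol lives in the trees of only $O(1)$ consecutive levels, so the total size of all balanced trees is bounded by the total number of nonterminals, which is $O(\delta\log\tfrac{n}{\delta})$ deterministically~\cite[Corollary 3.12]{KNOlatin22} and equals $g_{lc}$ with high probability~\cite[Theorem 3.13]{KNOlatin22}. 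Together with the input SLP of size $g$, this fits in $O(g+g_{lc})$ space.

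For the running time, each of the $O(n)$ symbols flowing through the pipeline triggers $O(1)$ insertions, lookups, and comparisons in the per-level trees, each costing $O(\log g_{lc})$, for a total of $O(n\log g_{lc})$. The step I expect to be the main obstacle is verifying that the twist does not alter the distribution of the output grammar: one has to argue that forcing oversized nonterminals to be boundaries is consistent with what the original procedure would do anyway, so that the bounds on $g_{lc}$ carried over from~\cite{KNOlatin22} continue to apply. Once this is checked, combining the ingredients above yields the claimed $O(n\log g_{lc})$ time and $O(g+g_{lc})$ space, all with high probability.
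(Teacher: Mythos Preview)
Your proposal is correct and mirrors the paper's argument almost exactly: stream $T$ from the SLP, pipeline per-level workers, realize each $\pi_k$ on the fly with the two balanced trees $T_{id}$ and $T_{pos}$, and apply the twist of omitting large symbols from the trees so that the total tree size is bounded by the number of grammar nonterminals. Your one stated worry---that the twist might change the output distribution---is a non-issue, because the parsing rules of~\cite{KNOlatin22} already place block boundaries before and after every symbol $X$ with $|\exp(X)|>\ell_k$, so never inserting such symbols into the trees is purely a data-structure optimization and the size bounds carry over unchanged (incidentally, those large symbols become singleton blocks and are copied as-is rather than ``absorbed'', but since they are never stored this does not affect your space analysis).
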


If we know $\delta$, we can abort the construction as soon 
as its total size exceeds $c \cdot \delta\log\frac{n}{\delta}$ for some suitable
constant $c$, and restart the process afresh. After $O(1)$  
attempts in expectation, we will obtain a locally consistent grammar of size 
$O(\delta\log\frac{n}{\delta})$~\cite[Corollary 3.15]{KNOlatin22}. The grammar we produce, in $O(n\log g_{lc})$ expected time, is then of guaranteed size $g_{lc} = O(\delta\log\frac{n}{\delta})$.

\section{Building the LZ parse from a LCG in $\tilde{O}(z)$ time}
\label{sec:lz77fromlc}

One of the many advantages of LCGs compared to general SLPs is that, related to Definition~\ref{def:slpindex}, they may allow trying out only $O(\log |P|)$ splitting positions of $P$ in order to discover all their primary occurrences, as opposed to $m-1$ if using a generic SLP. This is the case of the LCG of size $O (\delta \log \frac n \delta)$ of the previous section \cite{KNOlatin22}, and of a generalization of it \cite{christiansen2020optimal}. The second grammar is a generalization of the first in the sense that any grammar produced with the first method \cite{KNOlatin22} can be produced by the second \cite{christiansen2020optimal}, and therefore every property we prove for the second method holds for the first as well. The first method introduces a restriction to produce grammars of size $O(\delta\log\frac{n}{\delta})$, whereas the second kind has a weaker space bound of $O(\gamma\log\frac{n}{\gamma})$, where $\gamma \ge \delta$ is the size of the smallest string attractor of $T$ \cite{kempa2018roots} (concretely, the parsing is as in Section~\ref{sec:locally_consistent_parsing} but does not enforce the condition $\exp(X) \le \ell_k$).
We now show how this feature enables us to find the LZ parse of those LCGs in time $O(z\log^4 n)$. We will then stick to the more general LCG \cite{christiansen2020optimal}; the results hold for the other too \cite{KNOlatin22}, as explained. 

Our technique combines results used for Theorems~\ref{thm:npolylogn} and \ref{thm:SLP_to_LZ77}: we will use exponential search, as in Section~\ref{sec:lz77gzlogn} to find the next phrase $T[i..k]$, and will use the data structures of Section~\ref{sec:lz77npolylog} to search for its leftmost occurrence in $T$; the fact that we will need to check just a logarithmic amount of splitting points will yield the bound.
We start with an analogue of Lemma \ref{lem:basic_checks} for our LCG; we get better bounds in this case.

\begin{lemma}
    \label{lem:basic_checks_lc}
    Given the LCG \cite{christiansen2020optimal} of size $g_{lc}$ of $T[1..n]$, we can build in $O(g_{lc})$ time and space a data structure supporting the same operations listed in Lemma~\ref{lem:basic_checks}, all in $O(\log n)$ time.
\end{lemma}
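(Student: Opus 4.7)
The plan is to exploit two structural properties of the LCG from \cite{christiansen2020optimal}: (a) its derivation tree has height $O(\log n)$, since by construction (cf.\ Section~\ref{sec:locally_consistent_parsing}) it has a logarithmic number of levels; and (b) the local consistency property, which ensures that matching fragments of $T$ are parsed identically at every level apart from $O(1)$ boundary nonterminals. Property (a) is what brings access and fingerprint queries down to $O(\log n)$; property (b) is what shaves the extra $\log n$ factor off the longest-common-prefix query compared to Lemma~\ref{lem:basic_checks}.

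First I would preprocess the grammar in $O(g_{lc})$ time and space: for every nonterminal $X$ store the length $|\exp(X)|$, the Karp-Rabin fingerprint $\phi(\exp(X))$, and, for each run-length rule $X \to Y^t$, the auxiliary value $x^{|\exp(Y)|} \bmod p$, so that $\phi(Y^i)$ for any $i \le t$ can be evaluated in $O(1)$ via a geometric-series formula. These values propagate bottom-up along the rules; the run-length rules are the only place where extra care is needed.

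With that in place, an access query $T[i]$ descends from the start symbol using subtree sizes (for a run-length rule $X \to Y^t$ the correct child is picked by one division), terminating in $O(\log n)$ steps. A fingerprint query $\phi(T[i..j])$ first walks down to the lowest nonterminal whose expansion spans both $i$ and $j$, then composes the precomputed fingerprints along the left fringe (stopping at $i$) and the right fringe (stopping at $j$); each fringe has $O(\log n)$ nodes, so the total cost is $O(\log n)$.

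The step that actually requires the new property, and the main obstacle, is the longest common prefix of $T[i..j]$ and $T[i'..j']$. Rather than binary-searching the LCP length as in Lemma~\ref{lem:basic_checks} (which costs $O(\log^2 n)$), I would descend the two derivation trees synchronously, level by level. At each level, local consistency guarantees that wherever the two substrings still agree, the sequences of nonterminals they induce on that level agree as well, except for $O(1)$ boundary symbols; those boundary symbols are recursed into, while the matching interior is passed over in $O(1)$ by comparing precomputed fingerprints and advancing pointers. Since the tree has only $O(\log n)$ levels, the total cost is $O(\log n)$. Lexicographic and co-lexicographic comparisons reduce to LCP (or longest common suffix) plus one character access at the first mismatch, again within $O(\log n)$. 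The delicate point, and the one I expect to require the most attention, is formalizing the synchronous descent so that the $O(1)$-per-level bookkeeping is honest in the presence of run-length rules and when the two fragments sit at different offsets inside their enclosing nonterminals; here the local-consistency guarantees of \cite{christiansen2020optimal} must be invoked explicitly.
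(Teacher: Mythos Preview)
Your proposal is correct and follows essentially the same approach as the paper: balanced height gives $O(\log n)$ access and fingerprint queries, and the local-consistency property is what yields the $O(\log n)$ LCP (and hence lexicographic comparison). The only difference is presentational: the paper simply cites Christiansen et al.\ \cite[Thm.~A.3]{christiansen2020optimal} for the fingerprint structure and Kempa--Kociumaka \cite[Thm.~5.25]{kempa2023collapsing} for the LCP/LCE algorithm, whereas you sketch these constructions from first principles; your synchronous-descent description is in fact the idea underlying the cited LCE result, though in a fully worked-out version one compares nonterminal \emph{identifiers} (not fingerprints) in the interior and only needs to touch $O(1)$ symbols per level before moving to the next, which is what keeps the per-level cost constant.
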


\begin{proof}
    Since the LCG grammar is already balanced, accessing $T[i]$ in $O(\log n)$ time is immediate. The Karp-Rabin fingerprints can be computed with the structure of  Christiansen et al.~\cite[Thm.~A.3]{christiansen2020optimal},
    which can be built in $O (g_{lc})$ space and time. 
    To compute longest common prefixes we use the approach of Kempa and Kociumaka \cite[Thm.~5.25]{kempa2023collapsing}, which works on our particular LCG without  any extra structure. If we modify it so that we also look for the first mismatching character, this gives us the (co)lexicographic order within the same time.\qed
\end{proof}

Let us consider the cost to build the data structures of Section~\ref{sec:lz77npolylog}. Using Lemma \ref{lem:basic_checks_lc}, we can sort the multisets $\mathcal{Y}$ and $\mathcal{Z}$ in time $O(g_{lc} \log g_{lc} \cdot \log n)$. This time dominates the construction time of the z-fast tries for $\mathcal{Y}$ and $\mathcal{Z}$, the grid structure $\mathcal G$, and the two-dimensional range minimum query mentioned in Lemma~\ref{lem:primary}. Further, because $g_{lc} \le \gamma\log\frac{n}{\gamma}$ \cite{christiansen2020optimal} and $\gamma \le z$ \cite{kempa2018roots}, this time is in $O(z\log^2 n\log(n/z))$.

After building those components, we start parsing the text using the exponential search of Section~\ref{sec:lz77gzlogn}.
To test whether the candidate phrase $T[i..j]$ occurs starting to the left of $i$, we use the LCG search algorithm for $T[i..j]$ provided by the LCG. 
Christiansen et al.~\cite{christiansen2020optimal} observed that we need to check only $O(\log(j-i))$ splitting points to find every primary occurrence of $T[i..j]$. They find the splitting points through a linear-time parse of $T[i..j]$, but we can do better by reusing the locally consistent parsing used to build the LCG. While we do not store the strings $S_k$ of Section~\ref{sec:locally_consistent_parsing}, we can recover the pieces that cover $T[i..j]$ by traversing the (virtual) grammar tree from the root towards that substring of $T$. 

\begin{lemma}[\cite{christiansen2020optimal}] \label{lem:M}
Let $M_0(i,j) = \{ i,j-1 \}$. For any $k>0$, let $M_k(i,j)$ contain the first and last positions ending a block of $S_k$ that are within $T[i..j-1]$ but do not belong to $M_{k'}(i,j)$ for any $k'<k$. Then, $M(i,j) = \cup_k M_k(i,j)$ is of size $O(\log(j-i))$ and the splitting point of every primary occurrence of $T[i..j]$ in $T$ belongs to $M(i,j)$.   
\end{lemma}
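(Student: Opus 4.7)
The plan is to prove the two parts of the lemma separately: (i) the size bound $|M(i,j)| = O(\log(j-i))$, and (ii) that every primary-occurrence splitting point of $P = T[i..j]$ lies in $M(i,j)$. I would address them in this order.

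For (i), I would first establish the nested structure of block boundaries: since $S_k$ is obtained by parsing $S_{k-1}$ into blocks, every position ending a block at level $k$ also ends a block at level $k-1$, and hence at every lower level. Each level contributes at most two new positions to $M(i,j)$, so it suffices to bound the number of levels with $M_k(i,j) \neq \emptyset$. I would then use the geometric growth of $\ell_k = (4/3)^{\lceil k/2\rceil-1}$: once $\ell_k$ significantly exceeds $j-i$, the substring $T[i..j-1]$ is covered by $O(1)$ consecutive level-$k$ blocks, leaving only $O(1)$ boundaries within it; these quickly stabilize as $k$ grows, and no new positions are added. This yields $O(\log(j-i))$ contributing levels and hence the size bound.

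For (ii), I would start with the easy case of a binary rule $X \to YZ$. Let $k$ be the level at which $X$ first appears as a nonterminal, so that $Y$ and $Z$ are adjacent symbols of $S_{k-1}$, with the $Y/Z$ boundary at position $s$ (the splitting point). Because the primary occurrence lies inside $\exp(X)$, which at level $k-1$ consists of exactly two symbols ($Y$ followed by $Z$), the substring $T[i..j-1] \subseteq \exp(X)$ meets exactly one level-$(k-1)$ boundary, namely $s$. Therefore $s$ is simultaneously the first and last level-$(k-1)$ boundary inside $T[i..j-1]$, so either $s \in M_{k-1}(i,j)$ or $s$ was already contributed at a strictly lower level; either way $s \in M(i,j)$.

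The hard part will be handling run-length rules $X \to Y^t$, which are allowed in the LCG. Here $\exp(X)$ at level $k-1$ contains $t-1$ internal boundaries between consecutive copies of $Y$, and a primary occurrence may cross any of them, so the splitting point need not be extremal inside $T[i..j-1]$. I would resolve this by appealing to local consistency: by symmetry among the $Y$-copies, any primary occurrence crossing an interior boundary of the run can be realised with the first or last such boundary inside $T[i..j-1]$ playing the role of splitting point, and these extremal boundaries are captured by $M(i,j)$. This case analysis for runs, rather than the cleaner binary case, is the technical heart of the argument.
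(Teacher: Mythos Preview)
The paper does not prove this lemma at all: its ``proof'' consists solely of pointing out that the set $M(i,j)$ matches Definitions~4.7 and~4.8 of Christiansen et al.\ and that the claimed property is their Lemma~6.4. Your self-contained argument is therefore a genuinely different route, and it is worth saying where it lands and where it does not.

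Your outline for part~(i) and for the binary case of part~(ii) is essentially the standard one. One soft spot in~(i): the step ``once $\ell_k$ significantly exceeds $j-i$, the substring is covered by $O(1)$ level-$k$ blocks'' is not automatic, because symbols with $|\exp(Y)|>\ell_k$ are passed through unchanged, so level-$k$ symbols are not uniformly of size $\Omega(\ell_k)$. The actual bound in Christiansen et al.\ comes from a more careful count tied to the parse-tree structure over $T[i..j]$, not from a blanket lower bound on block length.

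The real gap is in the run-length case of~(ii), which you rightly flag as the crux. Your symmetry argument shows only that \emph{some} primary occurrence of $T[i..j]$ can be taken with an extremal $Y$--$Y$ boundary as splitting point; it does not show that the splitting point of \emph{the given} primary occurrence lies in $M(i,j)$, which is what the lemma literally asserts. In Christiansen et al.'s framework this is resolved not by symmetry but by how primary occurrences are \emph{defined} for run-length rules (a specific boundary is designated), and their Lemma~6.4 is stated accordingly. For the downstream application in Section~\ref{sec:lz77fromlc} the weaker statement you actually prove --- that checking the points of $M(i,j)$ suffices to discover every occurrence --- is in fact enough, but your write-up should either prove the lemma as stated (which requires pinning down the run-length convention) or explicitly weaken the claim to what the symmetry argument delivers.
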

\begin{proof}
Our definition of $M(i,j)$ includes the positions in Definitions 4.7 and 4.8 of Christiansen et al.~\cite{christiansen2020optimal} (they use $B_r$ and $\hat{B}_r$ instead of our even and odd levels $S_k$). The property we state corresponds to their Lemma~6.4 \cite{christiansen2020optimal}.\qed
\end{proof}
To compute $M(i,j)$, then, we descend from the root of the (virtual) parse tree of the LCG towards the lowest nonterminal $X$ that fully contains $T[i..j]$, and continue from $X$ towards the leaf $L$ that contains $T[i]$. We then start adding to $M(i,j)$ the endpoint of $L$ (which is $i$), and climb up to its parent $P$. If $P$ ends in the same position of $L$, we shift $P$ to its next sibling. We now set $L = P$, add the last position of $L$ to $M(i,j)$, climb up to its parent $P$, and so on until the last position of $L$ exceeds $j-1$ (which may occur when reaching $X$ or earlier). We proceed analogously with the path from $X$ to the leaf that contains $T[j-1]$.

We visit $O(\log n)$ nodes in this process, but since the LCG may not be binary, we may need $O(\log n)$ time to find the proper children of a node. The total time is then $O(\log^2 n)$.


Once the set $M(i,j)$ of splitting points is found, we search for each of them as in Lemma~\ref{lem:primary}, each in time $O(\log^2 n)$. Therefore, the total time to check a candidate $T[i..j]$ is $O(\log^2 n \log(j-i))$. In turn, the exponential search that finds the next phrase $T[i..k]$ carries out $O(\log(k-i))$ such checks, with $j-i \le 2(k-i)$, thus the total time to find the next phrase is $O(\log^2 n \log^2 (k-i))$. By Jensen's inequality again, this adds up to $O(z\log^2 n \log^2 (n/z))$, which absorbs the time to build the data structures.

\begin{theorem}
\label{thm:lcpolylog}
Given the LCG of Christiansen et al.~\cite{christiansen2020optimal} of size $g_{lc}$ of $T[1..n]$, we can build whp the LZ parse of $T$ in $O(z \log^2 n \log^2 (n/z))$ time and $O(g_{lc})$ extra space. The result also holds verbatim for the LCG of Kociumaka et al.~\cite{KNOlatin22}.
\end{theorem}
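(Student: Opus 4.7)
The plan is to combine the grammar-based index of Section~\ref{sec:lz77npolylog} (built now on the LCG using Lemma~\ref{lem:basic_checks_lc} in place of Lemma~\ref{lem:basic_checks}) with the exponential-search scheme of Section~\ref{sec:lz77gzlogn}, and then exploit Lemma~\ref{lem:M} so that each length test requires only $O(\log(j-i))$ splitting-point probes instead of $j-i$ of them. The reason this yields the stated bound is that the factor $g$ paid in Section~\ref{sec:lz77gzlogn} to check one candidate length is now replaced by $O(\log^2 n\cdot\log(j-i))$.

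First I would build the components $\mathcal{Y}$, $\mathcal{Z}$, $\mathcal{G}$, the two z-fast tries, and the two-dimensional range-minimum structure on the rules of the LCG, using the $O(\log n)$-time (co-)lexicographic comparisons provided by Lemma~\ref{lem:basic_checks_lc}. As noted in the discussion preceding the theorem, sorting $\mathcal{Y}$ and $\mathcal{Z}$ takes $O(g_{lc}\log g_{lc}\cdot\log n)$ time, which, combined with $g_{lc}=O(\gamma\log(n/\gamma))$ and $\gamma\le z$, fits in $O(z\log^2 n\log(n/z))$ and is absorbed by the final running time. The z-fast tries, grid, and range-minimum structure cost less; the whole index occupies $O(g_{lc})$ extra space.

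Second, I would parse the text left to right. Assuming $T[1..i-1]$ has been parsed, the next phrase $T[i..k]$ is found by the doubling trick from Section~\ref{sec:lz77gzlogn}: try lengths $j-i=1,2,4,\ldots$ until $T[i..j]$ has no occurrence starting before $i$, then binary-search within the final interval. To test a single candidate $T[i..j]$ I compute $M(i,j)$ (Lemma~\ref{lem:M}) by descending from the root of the virtual LCG parse tree to the leaves covering $T[i]$ and $T[j-1]$ and climbing back up, collecting block-end positions on the way; since the LCG is not binary, locating the correct child at each node costs $O(\log n)$, giving $O(\log^2 n)$ total for $M(i,j)$. For each of the $O(\log(j-i))$ splitting points so produced I invoke Lemma~\ref{lem:primary} in $O(\log^2 n)$ time to find the leftmost primary occurrence of $T[i..j]$ with that splitting point, and keep the minimum over the set. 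By Lemma~\ref{lem:M} every primary occurrence of $T[i..j]$ is found this way, and by Lemma~\ref{lem:primary_some} the leftmost (possibly overlapping) occurrence of $T[i..j]$ is primary, so I correctly decide whether $T[i..j]$ has an occurrence starting before $i$.

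Third, I would sum up the cost. One length test costs $O(|M(i,j)|\cdot\log^2 n)=O(\log^2 n\cdot\log(j-i))$; the exponential search performs $O(\log(k-i))$ tests with $j-i\le 2(k-i)$, so producing one phrase costs $O(\log^2 n\cdot\log^2(k-i))$. Summing over the $z$ phrases of lengths $n_1,\ldots,n_z$ yields $O(\log^2 n\sum_i\log^2 n_i)$, and Jensen's inequality (concavity of $\log^2$) bounds this by $O(z\log^2 n\log^2(n/z))$, which dominates the construction time of the index. The extra space, beyond the LCG, is $O(g_{lc})$, matching the claim. The main subtlety to verify carefully is that Lemma~\ref{lem:primary} transfers from SLPs to the LCG, whose rules are non-binary and possibly run-length: one has to regard each boundary between two consecutive children of a rule $X\to Y_1\cdots Y_t$ (or $X\to Y^t$) as an independent ``splitting point,'' contributing one entry to $\mathcal{G}$ per boundary, and check that Observation~\ref{obs:leftmost} still applies under the grammar tree induced by this convention. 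Since Christiansen et al.'s Lemma~6.4 already identifies $M(i,j)$ with exactly these boundaries, the extension is routine but should be made explicit.
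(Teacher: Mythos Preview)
Your proposal is correct and follows essentially the same route as the paper: build the grammar-index components on the LCG using Lemma~\ref{lem:basic_checks_lc}, run the exponential search of Section~\ref{sec:lz77gzlogn}, compute $M(i,j)$ in $O(\log^2 n)$ time by walking the virtual parse tree, probe each of the $O(\log(j-i))$ splitting points via Lemma~\ref{lem:primary}, and finish with Jensen's inequality. Your closing remark about extending the grid $\mathcal G$ and Observation~\ref{obs:leftmost} to non-binary and run-length rules is a point the paper leaves implicit, so it is a welcome addition rather than a divergence; the only minor slip is that the leftmost occurrence being primary does not literally follow from Lemma~\ref{lem:primary_some} but from the grammar-tree pruning convention (Definition~\ref{def:slpindex}).
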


\section{Conclusions}

We have contributed to the problem of {\em compressed conversions}, that is, using asymptotically optimal space, between various compression formats for repetitive data. Such a space means linear in the input plus output size, which outrules the possibility of decompressing the data. This is crucial to face the sharp rise the size of data in sequence form has experienced in the last decades, which requires manipulating the data always in compressed form. To the best of our knowledge, we are the first to propose methods to build the Lempel-Ziv parse of a text directly from its straight-line program representation. Our methods work in time $O(n\log^2 n)$ and $O(gz\log^2 n)$. The second is polynomial on the size of the compressed data and we thus call it a {\em fully-compressed conversion}; such methods can be considerably faster when the data is highly compressible. We also gave methods to convert from straight-line programs to locally consistent grammars, which enable faster and more complex queries, in $O(n\log n)$ time. As a showcase for their improved search capabilities, we show how to produce the Lempel-Ziv parse from those grammars in time $O(z\log^4 n)$, another fully-compressed conversion. All of our conversions work with high probability.  

Obvious open problems are obtaining better running times without using more space. 
Furthermore, we think that approaches similar to those described in this article can be applied to effectively compute other parses, such as the lexicographic parse \cite{navarro2020approximation}. We plan to address these in the extended version.

\bibliographystyle{plain}
\bibliography{main}

\end{document}